\newtheorem{lemma}{Lemma}
\begin{document}

\title{UAV-Enabled Wireless Networks for Integrated Sensing and Learning-Oriented Communication}
\author{Wenhao Zhuang, Xinyu He, Yuyi Mao,~\IEEEmembership{Senior Member,~IEEE}, and Juan Liu,~\IEEEmembership{Member,~IEEE}
	\thanks{
		W. Zhuang, X. He, and Y. Mao are with the Department of Electrical and Electronic Engineering, Hong Kong Polytechnic University, Hong Kong (e-mails: wzhuan@polyu.edu.hk, xinyu.he@connect.polyu.hk, yuyi-eie.mao@polyu.edu.hk). J. Liu is with the School of Electrical Engineering and Computer Science, Ningbo University, Ningbo, China (e-mail: liujuan1@nbu.edu.cn)\emph{(Corresponding author: Yuyi Mao.)} 
		}
}
\maketitle
\begin{abstract}
	Future wireless networks are envisioned to support both sensing and artificial intelligence (AI) services. However, conventional integrated sensing and communication (ISAC) networks may not be suitable due to the ignorance of diverse task-specific data utilities in different AI applications.
	In this letter, a full-duplex unmanned aerial vehicle (UAV)-enabled wireless network providing sensing and edge learning services is investigated. To maximize the learning performance while ensuring sensing quality, a convergence-guaranteed iterative algorithm is developed to jointly determine the uplink time allocation, as well as UAV trajectory and transmit power. Simulation results show that the proposed algorithm significantly outperforms the baselines and demonstrate the critical tradeoff between sensing and learning performance.
\end{abstract}

\begin{IEEEkeywords}
	Integrated sensing and communication (ISAC), learning-oriented communication, edge artificial intelligence (AI), unmanned aerial vehicle (UAV).
\end{IEEEkeywords}
\section{Introduction}
Integrated sensing and communication (ISAC) has garnered significant attention due to the ever-increasing demand for versatile wireless networks~\cite{fan_isac_2022}. Future sixth-generation (6G) mobile networks are also envisioned to support a variety of edge artificial intelligence (AI) applications~\cite{yuyi_greenai_2024}. To achieve these ambitions, it is critical to investigate multi-functional wireless networks that can simultaneously provide ISAC and AI services. Prior studies on ISAC-empowered wireless networks mostly focus on data-oriented communication~\cite{haocheng_isac_2023, zhenyao_fd_isac_2023}, which targets at conventional quality of service (QoS) metrics such as throughput and latency. However, the transmitted data may carry task-specific utilities for different AI applications. 

In particular, for model training at a mobile edge computing (MEC) server, training data collected by Internet-of-Things (IoT) devices need to be uploaded~\cite{dongzhu_data_2019}. To achieve satisfactory accuracy, radio resources should be properly allocated for data collection in specific model training tasks. This new characteristic has prompted the recent investigations of learning-oriented communication~\cite{shuai_ml_edge_2020, laingkai_edge_2021}. Specifically, learning-centric power allocation algorithms were developed in~\cite{shuai_ml_edge_2020} for edge model training via an empirical function of learning performance with respect to the training data volume. This work was extended in~\cite{laingkai_edge_2021} by further optimizing the uplink time allocation, where an inverse power relationship between generalization error and uplink transmission time was established.

Wireless networks that synergize sensing and edge learning services have also emerged~\cite{mengxuan_fedisac_2024},\cite{peixi_fedisac_2022}. 
A dual-functional wireless network for target sensing and federated learning (FL) was proposed in~\cite{mengxuan_fedisac_2024}, where the receive beamforming and uplink IoT device scheduling are optimized to improve the learning performance while meeting the sensing requirement. In~\cite{peixi_fedisac_2022}, a vertical FL pipeline was presented, which learns to recognize human motion from distributed wireless sensing data. 
Nevertheless, these works do not fit scenarios with low-end IoT devices, where local training is prohibitive due to resource constraints. 
Moreover, sensing targets and IoT devices may be beyond the coverage of base stations. This practical concern warrants the aid of unmanned aerial vehicles (UAVs)~\cite{qingqing_uav_2018}.

In this letter, a novel UAV-enabled wireless network for integrated sensing and learning-oriented communication is investigated. Specifically, a full-duplex UAV is employed to collect data samples from IoT devices for training classification models (CMs) at an MEC server and performs target sensing. Notably, our system is distinct from the UAV-enabled ISAC network~\cite{kaitao_isac_2023} that focuses on data-oriented communication. Compared to the UAV-enabled edge learning system~\cite{jianxin_uav_learn_2023}, our system is more challenging in order to cater both the sensing and learning performance requirements. 
Our main contributions are summarized as follows: First, an optimization problem is formulated for jointly determining the uplink time allocation, as well as UAV trajectory and transmit power. This problem aims to maximize the accuracy of trained CMs subject to the sensing requirement. Second, a block coordinate descent (BCD)-based algorithm with convergence guarantee is developed, which ingeniously applies quadratic transform~\cite{yannan_fp_2024} and convex approximation techniques. Simulation results demonstrate the superiority of the proposed algorithm. It is also found that joint UAV trajectory and transmit power optimization is pivotal for managing the sensing and learning performance.

\textbf{Notations:} Boldface lower-/upper-case letters denote vectors/matrices, and $\mathcal{CN}(\bm{{\mu}},\bm{{\Sigma}})$ represents a complex Gaussian distribution with mean $\bm{\mu}$ and covariance matrix $\bm{{\Sigma}}$. Besides, $(\cdot)^{\text{T}}$, $(\cdot)^{\mathrm{H}}$, $\mathbb{E}\left\{\cdot\right\}$, and $\lVert \cdot \rVert$ denote the transpose, conjugate transpose, statistical expectation, and Euclidean norm, respectively, and $\mathbf{v} \succeq \mathbf{0}$ indicates vector $\mathbf{v}$ is component-wise nonnegative.

\section{System Model}
	\subsection{Network Description}
	We consider a UAV-enabled dual-functional wireless network shown in Fig.~\ref{fig:system_model}, where a full-duplex UAV is employed to collect labeled data samples from $K$ single-antenna IoT devices for training $M$ CMs at an MEC server. The UAV is also responsible for sensing a potential target through the same frequency band. We denote the sets of CMs and IoT devices as $\mathcal{M}\triangleq\left\{1, \cdots, M\right\}$ and $\mathcal{K}\triangleq\left\{1, \cdots, K\right\}$, respectively. 
	The UAV is equipped with half-wavelength spaced uniform linear arrays (ULAs) with $N_t=N_a$ transmit elements and $N_r=N_a$ receive elements, and flies at altitude $H$. It departs from the MEC server at time $0$ and returns at time $T$. After that, the collected data samples are transferred to the MEC server for model training. Following~\cite{shuai_ml_edge_2020}, the IoT devices are partitioned into $M$ non-overlapping groups, denoted as $\left\{\Omega_{1},\cdots, \Omega_{M}\right\}$. Devices in $\Omega_{m}$ possess data samples for training the $m$-th CM. For tractability, the period $T$ is equally divided into $N$ time slots indexed by $n \in \mathcal{N}\triangleq \left\{0, \cdots, N\right\}$. Each time slot has a length of $\delta=T \slash N$. 
	Denote $\mathbf{Q} \triangleq \left\{\mathbf{q}_n\right\}$ as the horizontal coordinates of the UAV. The UAV can be regarded as static at $\mathbf{q}_n$ within time slot $n$ given sufficiently large $N$~\cite{qingqing_uav_2018}.
	The IoT devices and sensing target are on the ground. We denote the maximum velocity of the UAV as $v_{\max}$. The horizontal coordinate of the $k$-th IoT device and sensing target are denoted as $\mathbf{l}_k$ and $\mathbf{t}$, respectively. Moreover, $\mathbf{q}_{0}$ denotes the horizontal coordinate of the MEC server.
	\begin{figure}[t!]
		\centering
		\includegraphics[width=1.0\linewidth]{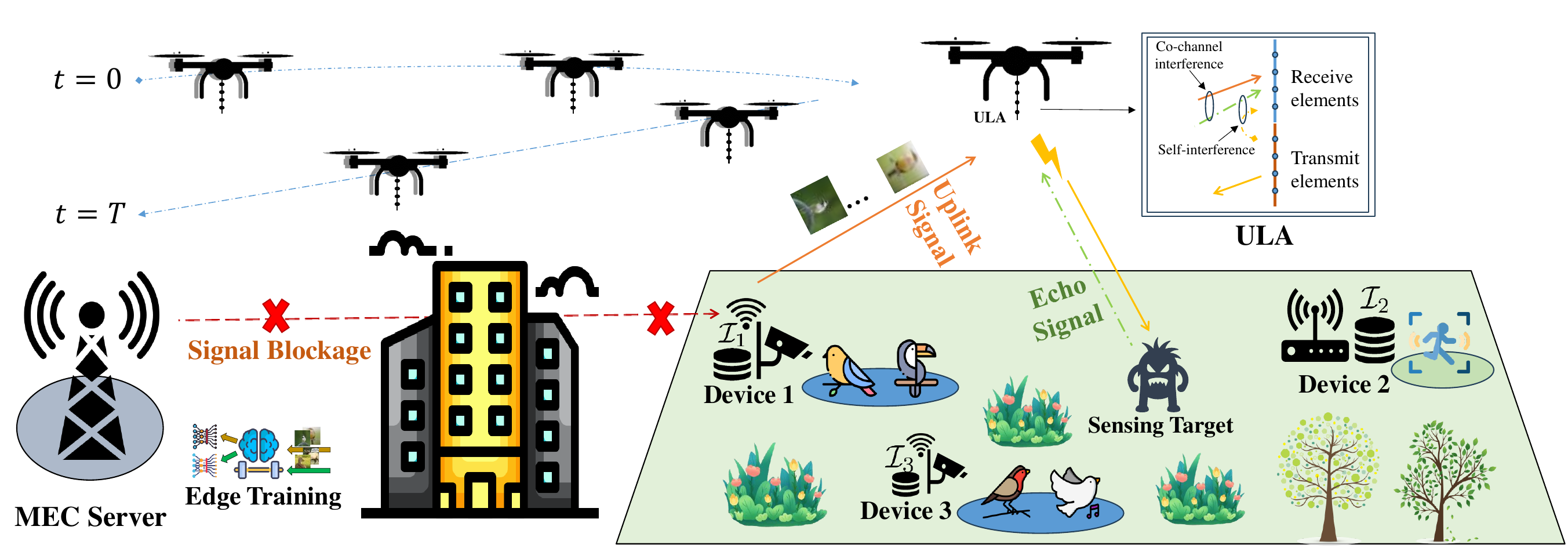}
		\caption{.~~A full-duplex UAV-enabled wireless network for integrated sensing and learning-oriented communication.}
		\label{fig:system_model}
		\vspace{-1.5em}
	\end{figure}
	
	\vspace{-0.5em}
	\subsection{Channel and Signal Models}
	Let $d\left(\mathbf{q}_n, \mathbf{g}\right) \triangleq \sqrt{H^{2}+\|\mathbf{q}_n-\mathbf{g}\|^{2}}$ denote the distance between the UAV and ground location $\mathbf{g}$ at time slot $n$, and $\mathbf{a}(\mathbf{q}_n, \mathbf{g}) \triangleq \big[1, e^{j \pi \frac{H}{d(\mathbf{q}_n, \mathbf{g})}},\cdots, e^{j \pi (N_a-1) \frac{H}{d(\mathbf{q}_n, \mathbf{g})}} \big]^{\mathrm{H}}$ be the steering vector toward $\mathbf{g}$. Assuming line-of-sight (LoS) propagation, the channel vector between the $k$-th IoT device and the UAV, and the round-trip channel matrix for radar sensing are respectively given as $\mathbf{h}_{k}(\mathbf{q}_n) = \sqrt{{\lambda_0}{d^{-2}(\mathbf{q}_n, \mathbf{l}_k)}} \mathbf{a}(\mathbf{q}_n, \mathbf{l}_k)$ and $\mathbf{H}_{\text{r}, n} = \sqrt{{\lambda_0 \xi}{d^{-4}(\mathbf{q}_n, \mathbf{t})}} \mathbf{a}(\mathbf{q}_n, \mathbf{t}) \mathbf{a}^{\mathrm{H}} (\mathbf{q}_n, \mathbf{t})$, where $\lambda_0$ and $\xi$ denote the channel power gain at the reference distance and the radar cross-section, respectively \cite{zhenyao_fd_isac_2023}.

	Time-division multiple access (TDMA) is adopted to coordinate uplink transmissions of IoT devices. Each time slot is partitioned into $K$ segments and $\beta_{k, n}$ is the proportion of time slot $n$ allocated to the $k$-th IoT device.
	Denote $\bm{\beta} \triangleq \left\{\beta_{k, n}\right\}$ and let $s_{k, n} \sim \mathcal{CN}(0, p_k)$ be the uplink signal of the $k$-th IoT device with average power $p_k$. Besides, we denote $\mathbf{p}_{\text{U}} \triangleq \left\{{p}_{\text{U}, n}\right\}$ as the UAV transmit power within all $N$ time slots and the sensing signal transmitted by the UAV as $\mathbf{x}_n = \sqrt{\frac{p_{\text{U}, n}}{N_a}} \mathbf{a}(\mathbf{q}_n, \mathbf{t})$ that maximizes the power at the sensing target~\cite{mimo_radar_2007}. The received signal at the UAV when the $k$-th IoT device is transmitting at time slot $n$ is given as $ \mathbf{y}_{k, n} = \mathbf{h}_{k}(\mathbf{q}_n) s_{k, n} + \mathbf{H}_{\text{r}, n} \mathbf{x}_n + \mathbf{H}_{\text{SI}} \mathbf{x}_n + \mathbf{z}_n$, where $\mathbf{H}_{\text{SI}} \in \mathbb{C}^{N_r \times N_t}$ represents the self-interference (SI) channel matrix and its $(p,q)$-th entry is given by $\sqrt{\alpha_{\text{SI}}} e^{j2\pi \frac{d_{p, q}}{\lambda}}$ with $\alpha_{\text{SI}}$, $\lambda$, and $d_{p, q}$ denoting the SI power coefficient, wavelength, and distance between the $p$-th transmit antenna and $q$-th receive antenna, respectively \cite{xiaoming_si_2018}. 
	We also denote $\mathbf{z}_n \sim \mathcal{CN}(0, \sigma^2 \mathbf{I})$ as the receive noise with $\sigma^{2}$ as the noise power at each antenna.

	\vspace{-0.5em}
	\subsection{Learning-Oriented Communication and Target Sensing}
	The signal-to-interference-plus-noise ratio (SINR) of the received signal from the $k$-th IoT device at time slot $n$ can be expressed as
	\begin{align}
		\tilde{\gamma}_k (\mathbf{q}_n, {p}_{\text{U}, n}) & = \frac{\mathbb{E} \{{\lVert \mathbf{h}_k \left(\mathbf{q}_n\right) s_{k, n} \rVert}^2 \}}{\lVert \left(\mathbf{H}_{\text{r}, n}+ \mathbf{H}_{\text{SI}} \right) \mathbf{x}_n \rVert^2+ \mathbb{E}\{{\lVert \mathbf{z}_n \rVert}^2\} } \nonumber \\
		& = \frac{\lambda_0 p_k N_a d^{-2}(\mathbf{q}_n, \mathbf{l}_k)}{\lVert \left(\mathbf{H}_{\text{r}, n}+ \mathbf{H}_{\text{SI}} \right) \mathbf{x}_n \rVert^2+ N_a \sigma^2},	\label{eq:com_sinr}
	\end{align}
	and the achievable rate is given as $\tilde{R}_k(\mathbf{q}_n, {p}_{\text{U}, n}) = B\log_2 (1 + \tilde{\gamma}_k(\mathbf{q}_n, {p}_{\text{U}, n}))$, where $B$ is the system bandwidth. The number of bits that can be uploaded from the $k$-th IoT device to the UAV over time period $T$ is given as $\tilde{A}_k(\mathbf{Q}, \bm{\beta}, \mathbf{p}_{\text{U}}) \triangleq \delta \sum_{n=1}^{N} \beta_{k, n} \tilde{R}_k(\mathbf{q}_n, p_{\text{U}, n})$, which should comply with the data availability constraint, i.e., $\tilde{A}_{k}(\mathbf{Q}, \bm{\beta}, \mathbf{p}_{\text{U}})\leq \mathcal{I}_{k}D_{m}, m\in\mathcal{M}, k \in \Omega_{m}$. Here, $D_{m}$ denotes the size of a data sample for training the $m$-th CM and $\mathcal{I}_k$ is the number of samples stored at the $k$-th IoT device. The number of data samples collected for training the $m$-th CM is approximated as $\sum_{k \in \Omega_{m}} {\tilde{A}_k(\mathbf{Q}, \bm{\beta}, \mathbf{p}_{\text{U}})}\slash {D_{m}}$. 
	The SINR of the received sensing echo at the UAV while the $k$-th IoT device is transmitting at time slot $n$ is given by $\tilde{\gamma}_k^{\text{rad}}(\mathbf{q}_n, {p}_{\text{U}, n})  = \frac{\lVert \mathbf{H}_{\text{r}, n} \mathbf{x}_n \rVert^2}{\lambda_0 p_k d^{-2}(\mathbf{q}_n, \mathbf{l}_k) + {\lVert \mathbf{H}_{\text{SI}} \mathbf{x}_n \rVert}^2 + N_a \sigma^2}$. Otherwise, $\tilde{\gamma}_k^{\text{rad}}(\mathbf{q}_n, {p}_{\text{U}, n}) = \frac{\lVert \mathbf{H}_{\text{r}, n} \mathbf{x}_n \rVert^2}{{\lVert \mathbf{H}_{\text{SI}} \mathbf{x}_n \rVert}^2 + N_a \sigma^2}$. 
	Next, we maximize the accuracy of the trained CMs while satisfying the sensing requirement.

\section{Problem Formulation}
	Denote $\phi$ as the maximum classification error among CMs. The accuracy of the bottlenecking CMs is maximized by solving the following problem:
	\begin{subequations}
	\begin{align}
		&\textbf{P}_{\text{1}}:  \min_{\bm{\beta},\mathbf{Q}, \mathbf{p}_{\text{U}}, \phi} \ \phi \nonumber \\
		&\ \ \ \ \ \ \ \  \text{s.t. } \ \phi \geq \tilde{\Psi}_{m}(\mathbf{Q}, \bm{\beta}, \mathbf{p}_{\text{U}}), m\in \mathcal{M}, \label{eq:phi} \\ 
		&\ \ \ \ \ \ \ \ \ \ \ \ \ \beta_{k, n}\in[0, 1], k \in \mathcal{K}, n\in\mathcal{N}, \label{eq:time0}\\ 
		&\ \ \ \ \ \ \ \ \ \ \ \ \ \sum\nolimits_{k \in \mathcal{K}} \beta_{k, n}\leq 1, n\in\mathcal{N}, \label{eq:uplinkTimeConstraint} \\
		&\ \ \ \ \ \ \ \ \ \ \ \ \ \tilde{\gamma}^{\text{rad}}_k(\mathbf{q}_n, {p}_{\text{U}, n})\geq \gamma_{\text{th}}^{\text{rad}}, k \in \mathcal{K}, n\in \mathcal{N},  \label{eq:radar_th} \\
		&\ \ \ \ \ \ \ \ \ \ \ \ \ \tilde{A}_{k}(\mathbf{Q}, \bm{\beta}, \mathbf{p}_{\text{U}})\leq \mathcal{I}_{k}D_{m}, m\in\mathcal{M}, k \in \Omega_{m}, \label{eq:data_availability} \\
		&\ \ \ \ \ \ \ \ \ \ \ \ \ p_{\text{U}, n} \leq p_{\text{UAV}}, n\in \mathcal{N}, \label{eq:power} \\
		&\ \ \ \ \ \ \ \ \ \ \ \ \ \mathbf{q}_0 = \mathbf{q}_n, \label{eq:traj_end} \\
		&\ \ \ \ \ \ \ \ \ \ \ \ \ \lVert \mathbf{q}_n-\mathbf{q}_{n-1}\rVert^2 \leq \left(v_{\max} \delta\right)^{2}, n\in\mathcal{N}\setminus \left\{0\right\}, \label{eq:mobility}
	\end{align}
	\end{subequations}
	where $\tilde{\Psi}_{m}(\mathbf{Q}, \bm{\beta}, \mathbf{p}_{\text{U}}) = a_{m}(\sum_{n=1}^{N} \sum_{k \in \Omega_{m}} \beta_{k,n} \tilde{R}_k(\mathbf{q}_n, p_{\text{U}, n})$ $ \cdot \delta \slash D_{m}  + A_m)^{-b_{m}}$ is a surrogate of classification error with $a_{m}$, $b_{m}$ being non-negative parameters specific to training CM $m$, and $A_m\geq 0$ denoting the number of historical data samples available at the MEC server prior to the UAV's flight~\cite{shuai_ml_edge_2020}. 
	The values of $\{a_m\}$ and $\{b_m\}$ can be obtained via curve fitting by training preliminary models on the historical data. In $\textbf{P}_{\text{1}}$, \eqref{eq:time0}, \eqref{eq:uplinkTimeConstraint} are the uplink time allocation constraints, and \eqref{eq:radar_th} enforces the sensing SINR to be above the threshold $\gamma^{\text{rad}}_{\text{th}}$. Constraint \eqref{eq:power} limits the transmit power of the UAV to $p_{\text{UAV}}$ and \eqref{eq:traj_end} specifies the initial/final position of the UAV. Moreover, \eqref{eq:mobility} is the UAV displacement constraint. 

	Since $\mathbf{Q}$, $\mathbf{p}_{\text{U}}$, and $\mathbf{H}_{\text{SI}}$ are entangled in $\tilde{\gamma}_k(\mathbf{q}_n, {p}_{\text{U}, n})$, we apply $\lVert \left(\mathbf{H}_{\text{r}, n} + \mathbf{H}_{\text{SI}} \right) \mathbf{x}_n \rVert \leq  \lVert\mathbf{H}_{\text{r}, n} \mathbf{x}_n \rVert + \lVert \mathbf{H}_{\text{SI}} \mathbf{x}_n \rVert \leq \sqrt{\lambda_0 \xi N^2_a d^{-4}(\mathbf{q}_n, \mathbf{l}_k) p_{\text{U}, n}} + \sqrt{\alpha_{\text{SI}} N^2_a  p_{\text{U}, n}}$ 
	to derive a lower bound of $\tilde{\gamma}_k(\mathbf{q}_n, {p}_{\text{U}, n})$ as
	\begin{align}
		{\gamma}_k(\mathbf{q}_n, {p}_{\text{U}, n})  & = \frac{\lambda_k  d^{-2}(\mathbf{q}_n, \mathbf{l}_k)}{{(\sqrt{\lambda_t}d^{-2}(\mathbf{q}_n, \mathbf{t}) + \sqrt{\lambda_{\text{SI}}} )}^2 p_{\text{U}, n} + \sigma^2 },
	\end{align}
	where $\lambda_{\text{SI}} \triangleq \alpha_{\text{SI}} N_a$, $\lambda_t \triangleq \lambda_0 \xi N_a$, and $\lambda_k \triangleq \lambda_0 p_k$.
	Thus, the data rate is lower bounded by $R_k(\mathbf{q}_n, p_{\text{U}, n}) = B\log_2(1 + \gamma_k(\mathbf{q}_n, p_{\text{U}, n}))$. Likewise, $\tilde{\gamma}_k^{\text{rad}}(\mathbf{q}_n, p_{\text{U}, n})$ is lower bounded by
	\begin{equation}
		\gamma_k^{\text{rad}}(\mathbf{q}_n, {p}_{\text{U}, n}, \beta_{k, n}) = 
		\begin{cases}
			\frac{\lambda_t p_{\text{U}, n} d^{-4}(\mathbf{q}_n, \mathbf{t})}{\lambda_k d^{-2}(\mathbf{q}_n, \mathbf{l}_k) + \lambda_{\text{SI}} p_{\text{U}, n} + \sigma^2},  &{\beta_{k, n} > 0} \\
			\frac{\lambda_t p_{\text{U}, n} d^{-4}(\mathbf{q}_n, \mathbf{t})}{\lambda_{\text{SI}}p_{\text{U}, n} + \sigma^2}, &{\beta_{k, n} = 0}
		\end{cases}.
	\end{equation}
	We replace $\tilde{\Psi}_{m}(\mathbf{Q}, \bm{\beta}, \mathbf{p}_{\text{U}})$ in~\eqref{eq:phi} and $\tilde{A}_{k}(\mathbf{Q}, \bm{\beta}, \mathbf{p}_{\text{U}})$ in~\eqref{eq:data_availability} respectively by ${\Psi}_{m}(\mathbf{Q}, \bm{\beta}, \mathbf{p}_{\text{U}})$ and ${A}_{k}(\mathbf{Q}, \bm{\beta}, \mathbf{p}_{\text{U}})$ (which are obtained by substituting $\tilde{R}_k(\mathbf{q}_n, p_{\text{U}, n})$ with $R_k(\mathbf{q}_n, p_{\text{U}, n})$), and $\tilde{\gamma}^{\text{rad}}_k(\mathbf{q}_n, p_{\text{U}, n})$ in~\eqref{eq:radar_th} by $\gamma_k^{\text{rad}}(\mathbf{q}_n, p_{\text{U}, n}, \beta_{k, n})$ to formulate $\textbf{P}_2$.
	Since ${\gamma}_k(\mathbf{q}_n, {p}_{\text{U}, n})$ and $\gamma_k^{\text{rad}}(\mathbf{q}_n, p_{\text{U}, n}, \beta_{k, n})$ avoid $\mathbf{Q}$ being in the exponent of complex exponential functions, $\textbf{P}_{2}$ is easier to handle. Although the optimal solution of $\textbf{P}_2$ may not meet~\eqref{eq:data_availability}, it can be easily rectified without degrading the objective value.
	
	However, $\textbf{P}_{\text{2}}$ is still difficult to solve. First, $R_k(\mathbf{q}_n, {p}_{\text{U}, n})$ and $\gamma_k^{\text{rad}}(\mathbf{q}_n, {p}_{\text{U}, n}, \beta_{k, n})$ are not jointly convex in $\bm{\beta},\mathbf{Q}$, and $\mathbf{p}_{\text{U}}$. Hence, $\textbf{P}_{2}$ is non-convex. Second, since $\gamma_k^{\text{rad}}(\mathbf{q}_n, {p}_{\text{U}, n}, \beta_{k, n})$ depends on $\mathbf{Q}$, $\bm{\beta}$, and $\textbf{p}_{\text{U}}$, achieving the sensing SINR is challenging without a holistic design of all these variables. Next, we develop an algorithm that guarantees a local optimal solution for $\textbf{P}_{2}$. Then, a feasible solution for $\textbf{P}_{1}$ is constructed.
	
\section{Proposed Solution}
In this section, we develop a BCD-based algorithm to alternatively optimize the uplink time allocation, UAV trajectory, and UAV transmit power, in which, quadratic transform and convex approximation techniques are applied.

\vspace{-0.8em}
\subsection{Uplink Time Allocation}
Denote $\mathbf{Q}^{(i-1)}\triangleq \{\mathbf{q}_n^{(i-1)}\}$ and $\mathbf{p}_{\text{U}}^{(i-1)} \triangleq \{p^{(i-1)}_{\text{U}, n}\}$ as the UAV trajectory and transmit power after the $(i-1)$-th BCD iteration, respectively. In the $i$-th iteration, the uplink time allocation is optimized by solving the following problem:
\begin{subequations}
\begin{align}
	&\textbf{P}^{(i)}_{\text{3}}:  \min_{\bm{\beta}, \phi}\ \phi \nonumber\\
	&\ \ \ \ \ \ \ \text{s.t. } \eqref{eq:time0},\eqref{eq:uplinkTimeConstraint}, \nonumber \\
	&\ \ \ \ \ \ \ \ \ \ \ \ \phi \geq {\Psi}_{m}(\mathbf{Q}, \bm{\beta}, \mathbf{p}_{\text{U}}), m\in \mathcal{M}, \label{eq:phi2} \\
	&\ \ \ \ \ \ \ \ \ \ \ \ {\gamma}^{\text{rad}}_k(\mathbf{q}_n, {p}_{\text{U}, n}, \beta_{k, n})\geq \gamma_{\text{th}}^{\text{rad}}, k \in \mathcal{K}, n\in \mathcal{N},  \label{eq:radar_th2} \\
	&\ \ \ \ \ \ \ \ \ \ \ \ {A}_{k}(\mathbf{Q}, \bm{\beta}, \mathbf{p}_{\text{U}})\leq \mathcal{I}_{k}D_{m}, m\in\mathcal{M}, k \in \Omega_{m}, \label{eq:data_availability2}
\end{align}
\end{subequations}
with $\mathbf{Q} = \mathbf{Q}^{(i-1)}$ and $\mathbf{p}_{\text{U}} = \mathbf{p}^{(i-1)}_{\text{U}}$.
Since \eqref{eq:time0}, \eqref{eq:uplinkTimeConstraint}, \eqref{eq:radar_th2}, \eqref{eq:data_availability2} given $\mathbf{Q}^{(i-1)}$, $\mathbf{p}_{\text{U}}^{(i-1)}$ are linear and $\Psi_m(\mathbf{Q}^{(i-1)}, \bm{\beta}, \mathbf{p}^{(i-1)}_{\text{U}})$ is convex, $\textbf{P}^{(i)}_{\text{3}}$ is a convex problem that can be optimally solved via the interior-point method (IPM).

\subsection{UAV Trajectory Optimization}
With $\bm{\beta} = \bm{\beta}^{(i)}$ and $\mathbf{p}_{\text{U}} = \mathbf{p}^{(i-1)}_{\text{U}}$, the UAV trajectory can be optimized by solving the following non-convex problem:
\begin{subequations}
	\begin{align}
		&\textbf{P}_{\text{4}}^{(i)}:  \min_{\mathbf{Q}, \phi} \phi \ \ \ \ \ \text{s.t. } \eqref{eq:traj_end}, \eqref{eq:mobility}, \eqref{eq:phi2}, \eqref{eq:radar_th2},  \eqref{eq:data_availability2}, \nonumber
	\end{align}
\end{subequations}
for which, lower bounds of $\gamma_k^{\text{rad}}(\mathbf{q}_n, p^{(i-1)}_{\text{U}, n}, \beta^{(i)}_{k, n})$ and $R_k(\mathbf{q}_n, p^{(i-1)}_{\text{U}, n})$ are derived by exploiting the quadratic transform \cite{yannan_fp_2024} in the following lemma.
\begin{lemma}
	\label{lemma:quad_trans_lb}
	Given $\bm{\alpha} \succeq \bm{0}$, $f_n(\bm{x})$, $g_n(\bm{x})>0, \forall \bm{x}$ and $n\in\mathcal{N}$, 
	\begin{align}
		\frac{f_n(\bm{x})}{g_n(\bm{x})} &\geq \left(2\alpha_n \sqrt{f_n(\bm{x})} - \alpha_n^2 g_n(\bm{x}) \right) \triangleq \Upsilon(\bm{\alpha}, \bm{x}) \label{eq:quad_trans_lb},
	\end{align}
	where the equality is achieved at $\alpha_n= {\sqrt{f_n(\bm{x})}}\slash {g_n(\bm{x})}, \forall n$.
\end{lemma}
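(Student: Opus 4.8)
The plan is to prove the inequality pointwise in $n$: since the claimed bound \eqref{eq:quad_trans_lb} is indexed separately by each $n\in\mathcal{N}$ and involves only the corresponding scalars $f_n(\bm{x})$, $g_n(\bm{x})$, and $\alpha_n$, I would fix an arbitrary $\bm{x}$ and a single index $n$, and reduce the statement to a scalar inequality between the fraction $f_n(\bm{x})/g_n(\bm{x})$ and the quantity $\Upsilon(\bm{\alpha},\bm{x})=2\alpha_n\sqrt{f_n(\bm{x})}-\alpha_n^2 g_n(\bm{x})$. The cleanest route is to show that their difference is nonnegative by completing the square, which simultaneously yields the equality condition.

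Concretely, I would move everything to one side and factor out $1/g_n(\bm{x})$, which is legitimate because $g_n(\bm{x})>0$ preserves signs. This gives
\begin{align}
\frac{f_n(\bm{x})}{g_n(\bm{x})} - \Upsilon(\bm{\alpha},\bm{x}) &= \frac{f_n(\bm{x})}{g_n(\bm{x})} - 2\alpha_n\sqrt{f_n(\bm{x})} + \alpha_n^2 g_n(\bm{x}) \nonumber \\
&= \frac{1}{g_n(\bm{x})}\left(\sqrt{f_n(\bm{x})} - \alpha_n g_n(\bm{x})\right)^2 \geq 0, \label{eq:sketch_square}
\end{align}
where the expansion of the perfect square uses $f_n(\bm{x})>0$ so that $\sqrt{f_n(\bm{x})}$ is well defined and $\big(\sqrt{f_n(\bm{x})}\big)^2=f_n(\bm{x})$. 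This establishes \eqref{eq:quad_trans_lb} for every $\alpha_n\geq 0$. The equality condition is then immediate from \eqref{eq:sketch_square}: the right-hand side vanishes precisely when $\sqrt{f_n(\bm{x})}=\alpha_n g_n(\bm{x})$, i.e. when $\alpha_n=\sqrt{f_n(\bm{x})}/g_n(\bm{x})$, matching the claim.

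Honestly, there is no substantial obstacle here; the only point requiring care is the bookkeeping around the standing positivity assumptions $f_n(\bm{x}),g_n(\bm{x})>0$, which are exactly what legitimizes the square root and the sign-preserving multiplication by $g_n(\bm{x})$. An equivalent and perhaps more conceptual presentation, which I would mention as a remark, is to view $\Upsilon(\bm{\alpha},\bm{x})$ as a concave quadratic in the auxiliary variable $\alpha_n$ for fixed $\bm{x}$; its unconstrained maximizer is $\alpha_n=\sqrt{f_n(\bm{x})}/g_n(\bm{x})\geq 0$, and substituting it back recovers exactly $f_n(\bm{x})/g_n(\bm{x})$, so the fraction equals the maximum of the surrogate and therefore dominates it for all feasible $\bm{\alpha}\succeq\bm{0}$. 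This variational reading is what makes the bound useful in the BCD scheme, since alternately updating $\bm{\alpha}$ to its closed-form optimum tightens \eqref{eq:quad_trans_lb} to equality at the current iterate.
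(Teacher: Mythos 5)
Your proof is correct, and its primary route differs from the paper's. The paper disposes of the lemma in one line: it observes that $\Upsilon(\bm{\alpha},\bm{x})$ is a concave (quadratic) function of $\bm{\alpha}$ and solves the stationarity condition $\nabla_{\bm{\alpha}}\Upsilon(\bm{\alpha},\bm{x})=\mathbf{0}$, which yields the maximizer $\alpha_n=\sqrt{f_n(\bm{x})}/g_n(\bm{x})$; substituting back shows the maximum value equals $f_n(\bm{x})/g_n(\bm{x})$, so the ratio dominates $\Upsilon$ for every $\bm{\alpha}$. Your main argument instead completes the square, writing $\frac{f_n(\bm{x})}{g_n(\bm{x})}-\Upsilon(\bm{\alpha},\bm{x})=\frac{1}{g_n(\bm{x})}\bigl(\sqrt{f_n(\bm{x})}-\alpha_n g_n(\bm{x})\bigr)^2\geq 0$, which is more elementary (no calculus or concavity needed), delivers the equality condition instantly from the vanishing of the square, and in fact proves the bound for all real $\alpha_n$, showing the hypothesis $\bm{\alpha}\succeq\bm{0}$ is not even needed for the inequality itself. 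For a quadratic the two arguments are two presentations of the same fact---completing the square is precisely how one locates the vertex---and your closing remark stating the variational reading (maximize the concave surrogate in $\alpha_n$, recover the ratio at the optimum) is essentially verbatim the paper's proof; that reading is also the one that matters algorithmically, since the BCD updates in the paper set $\bm{\alpha}$ to its closed-form optimum to make the surrogate tight at the current iterate. One negligible bookkeeping point: only $f_n(\bm{x})\geq 0$ is needed for $\sqrt{f_n(\bm{x})}$ to be defined, while strict positivity of $g_n(\bm{x})$ is what licenses your division; your accounting of the hypotheses is accurate.
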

\begin{proof}
	Since $\Upsilon(\bm{\alpha}, \bm{x})$ is concave in $\bm{\alpha}$, the proof is obtained by solving $\nabla_{\bm{\alpha}} \Upsilon \left(\bm{\alpha}, \bm{x}\right)=\mathbf{0}$.
\end{proof}
Define $l^{(i)}_{n}(\mathbf{q}_n) \triangleq \sqrt{\lambda_t p^{(i-1)}_{\text{U}, n}} d^{-2}\left(\mathbf{q}_n, \mathbf{t}\right)$ and $z^{(i)}_{k, n}(\mathbf{q}_n) \triangleq \lambda_k d^{-2}\left(\mathbf{q}_n, \mathbf{l}_k \right) + \lambda_{\text{SI}} p^{(i-1)}_{\text{U}, n} + \sigma^2$. When $\beta^{(i)}_{k, n} > 0$, a lower bound of $\gamma_k^{\text{rad}}(\mathbf{q}_n, p^{(i-1)}_{\text{U}, n}, \beta^{(i)}_{k, n})$ is derived using Lemma \ref{lemma:quad_trans_lb} as
\begin{align}
	\Phi^{(i)}_{k, n}(\mathbf{q}_n) \triangleq 2 \varphi^{(i)}_{k, n} {l^{(i)}_{n}(\mathbf{q}_n)} - {\varphi^{(i)^2}_{k, n}} z^{(i)}_{k, n}(\mathbf{q}_n),
\end{align}
where $\varphi^{(i)}_{k, n} \! = \! {{l^{(i)}_{n}(\mathbf{q}^{(i-1)}_n)}} \! \slash \! {z^{(i)}_{k, n}(\mathbf{q}^{(i-1)}_n)}$. Although $\Phi^{(i)}_{k, n}(\mathbf{q}_n)$ is non-convex, it can be handled via convex approximation.

Since $d^{-2}(\mathbf{q}_n, \mathbf{g})$ and $d^2(\mathbf{q}_n, \mathbf{g})$ are respectively convex in $d^2(\mathbf{q}_n, \mathbf{g})$ and $d(\mathbf{q}_n, \mathbf{g})$, we derive their lower bounds via the first-order Taylor approximation at $\mathbf{q}_{n} = \mathbf{q}_{n}^{(i-1)}$ as follows:
\begin{align}
	\underline{d}_a^{(i)}(\mathbf{q}_n, \mathbf{g}) &\triangleq \frac{2}{d^{2}(\mathbf{q}^{(i-1)}_{n}, \mathbf{g})}-\frac{d^2(\mathbf{q}_n, \mathbf{g})}{d^{4}(\mathbf{q}^{(i-1)}_{n}, \mathbf{g})}, \\
	\underline{d}_b^{(i)}(\mathbf{q}_n, \mathbf{g}) &\triangleq 2(\mathbf{q}_{n}^{(i-1)}-\mathbf{g}) (\mathbf{q}_n-\mathbf{q}_n^{(i-1)})^{\mathrm{T}} + d^2(\mathbf{q}^{(i-1)}_{n}, \mathbf{g}).
\end{align}
A concave lower bound of $l^{(i)}_{n}(\mathbf{q}_n)$ is thus given as $\underline{l}_{n}^{(i)}(\mathbf{q}_n) \triangleq \sqrt{\lambda_t p^{(i-1)}_{\text{U}, n}} \underline{d}_{a}^{(i)}(\mathbf{q}_n, \mathbf{t})$.
We define auxiliary variables $\mathbf{E} \triangleq \left\{e_k [n] \right\}$ with $ 0 < e_k[n] \leq d^2(\mathbf{q}_n, \mathbf{l}_k)$ to upper bound $z^{(i)}_{k, n}(\mathbf{q}_n)$ as $\overline{z}^{(i)}_{k, n}(\mathbf{q}_n) \triangleq \lambda_k e^{-1}_k [n] +  \lambda_{\text{SI}} p^{(i-1)}_{\text{U}, n} + \sigma^2$.
When $\beta_{k, n}^{(i)} = 0$, $\overline{z}^{(i)}_{k, n}(\mathbf{q}_n) \triangleq \lambda_{\text{SI}} p^{(i-1)}_{\text{U}, n} + \sigma^2$. Thus, \eqref{eq:radar_th2} is tightened as
	\begin{align}
		\underline{\gamma}^{(i)}_{\text{s}, k}(\mathbf{q}_n) &\geq \gamma_{\text{th}}^{\text{rad}}, k \in \mathcal{K}, n\in \mathcal{N},  \label{eq:cvx_rad0} \\
		0< e_k[n] &\leq \underline{d}^{(i)}_b(\mathbf{q}_n, \mathbf{l}_k), k \in \mathcal{K}, n\in \mathcal{N}, \label{eq:cvx_rad1}
	\end{align}
where $\underline{\gamma}^{(i)}_{\text{s}, k}(\mathbf{q}_n) \triangleq 2 \varphi^{(i)}_{k, n} \underline{l}_{n}^{(i)}(\mathbf{q}_n) - {\varphi^{{(i)}^2}_{k, n}} \overline{z}^{(i)}_{k, n}(\mathbf{q}_n)$ is a tight concave lower bound of $\gamma_k^{\text{rad}}(\mathbf{q}_n, p^{(i-1)}_{\text{U}, n}, \beta^{(i-1)}_{k, n})$.

Next, we define $r_{k}(\mathbf{q}_n) \triangleq \lambda_k d^{-2}(\mathbf{q}_n, \mathbf{l}_{k})$ and $\psi^{(i)}_{n}(\mathbf{q}_n) \triangleq (\sqrt{\lambda_t}d^{-2}(\mathbf{q}_n, \mathbf{t}) + \sqrt{\lambda_{\text{SI}}})^2 p^{(i-1)}_{\text{U}, n} + \sigma^2$. 
Thus, $\gamma_k(\mathbf{q}_n, p^{(i-1)}_{\text{U}, n})$ is lower bounded by $\underline{\gamma}^{(i)}_{c, k}(\mathbf{q}_n) \triangleq 2 \rho^{(i)}_{k,  n} \sqrt{r_{k}(\mathbf{q}_n)} - {\rho^{(i)^2}_{k, n}} \psi^{(i)}_{n}(\mathbf{q}_n)$, where $\rho^{(i)}_{k, n} \triangleq {\sqrt{r_{k}(\mathbf{q}^{(i-1)}_n)}} \slash {\psi^{(i)}_{n}(\mathbf{q}^{(i-1)}_n)}$. 
To convexify $\underline{\gamma}^{(i)}_{c, k}(\mathbf{q}_n)$, we derive a concave lower bound of $r_{k}(\mathbf{q}_n)$ as $\underline{r}_{k}^{(i)}(\mathbf{q}_n) \triangleq \lambda_k \underline{d}_a^{(i)}(\mathbf{q}_n, \mathbf{l}_k)$. 
Auxiliary variables $\mathbf{U} \triangleq \left\{u[n]\right\}$ with $0 < u[n] \leq d^2(\mathbf{q}_n, \mathbf{t})$ are introduced and $\psi^{(i)}_{n}(\mathbf{q}_n)$ is upper bounded by $\overline{\psi}^{(i)}_{n}(\mathbf{q}_n) \triangleq {(\sqrt{\lambda_t}u^{-1}[n] + \sqrt{\lambda_{\text{SI}}} )}^2 p^{(i-1)}_{\text{U}, n} + \sigma^2$. 
Thus, $\underline{\gamma}^{(i)}_{k}(\mathbf{q}_n) \triangleq \rho^{(i)}_{k, n}  \sqrt{\underline{r}_{k}(\mathbf{q}_n)} - {\rho^{{(i)}^2}_{k, n}} \overline{\psi}^{(i)}_{n}(\mathbf{q}_n)$ is obtained as a lower bound of $\underline{\gamma}^{(i)}_{k}(\mathbf{q}_n)$ and \eqref{eq:phi2} is tightened as follows: 
\begin{align}
	0 < u[n] &\leq \underline{d}^{(i)}_{b}(\mathbf{q}_n, \mathbf{t}), n\in \mathcal{N}, \label{eq:cvx_un} \\
	\underline{\gamma}^{(i)}_{k}(\mathbf{q}_n) &\geq 0, k \in \mathcal{K}, n\in \mathcal{N}, \\
	{\Psi}^{(i)}_{\mathbf{Q}, m}(\mathbf{Q}) &\leq \phi , m \in \mathcal{M}, \label{eq:cvx_phi_traj}
\end{align}
where ${\Psi}^{(i)}_{\mathbf{Q}, m}(\mathbf{Q})$ is obtained by substituting $\tilde{R}_k(\mathbf{q}_n, p_{\text{U}, n})$ in $\tilde{\Psi}_{m}(\mathbf{Q}, \bm{\beta}, \mathbf{p}_{\text{U}})$ with $\underline{R}^{(i)}_k(\mathbf{q}_n) \triangleq B\log_2 (1 + \underline{\gamma}^{(i)}_{k}(\mathbf{q}_n))$, i.e., a tight lower bound of $R_k(\mathbf{q}_n, p^{(i-1)}_{\text{U}, n})$.

To tackle the non-convex constraint \eqref{eq:data_availability2}, we apply the inverse quadratic transform \cite{yannan_fp_2024} in the following lemma.
\begin{lemma}
	\label{lemma:quad_trans_ub}
	Given $\bm{1} \succeq \bm{\theta} \succeq \bm{0}$, $\bm{\rho} \succeq \bm{0}$, $f_n(\bm{x})$, $g_n(\bm{x})>0, \forall \bm{x}$, $2 \rho_n \sqrt{g_n(\bm{x})} - \rho_n^2 f_n(\bm{x}) \geq 0$ and $n\in\mathcal{N}$, 
	\begin{align}
		\ln \left(1+ \frac{f_n(\bm{x})}{g_n(\bm{x})}\right) &\leq {\left(1-\theta_{n}\right) \frac{f_n(\bm{x})}{g_n(\bm{x})} - h(\theta_n)} \nonumber \\
		& \leq \frac{1-\theta_{n}}{2 \rho_n \sqrt{g_n(\bm{x})} - \rho_n^2 f_n(\bm{x})} - h(\theta_n), \label{eq:ldt}
	\end{align}
	where $h(\theta_n) \triangleq \theta_{n}+ \ln (1-\theta_{n})$. The first and second equalities are respectively achieved at $\theta_n={f_n(\bm{x})} \slash {(f_n(\bm{x}) + g_n(\bm{x}))}$ and $\rho_n={\sqrt{g_n(\bm{x})}}\slash {f_n(\bm{x})}, \forall n$.
\end{lemma}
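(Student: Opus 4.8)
The plan is to establish the two inequalities separately, since the middle expression and the right-hand expression are linked only through the common $-h(\theta_n)$ term. For the first inequality I would treat it as a one-dimensional convexity (Fenchel-type) statement in the auxiliary variable $\theta_n$, exactly in the spirit of the Lagrangian dual transform; for the second I would simply invoke Lemma~\ref{lemma:quad_trans_lb} with the roles of $f_n$ and $g_n$ interchanged and then pass to reciprocals.

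For the first inequality, I would fix $\bm{x}$ and write $t \triangleq f_n(\bm{x})/g_n(\bm{x}) > 0$. Define the scalar function $\Lambda(\theta) \triangleq (1-\theta)t - h(\theta) = (1-\theta)t - \theta - \ln(1-\theta)$ on $\theta \in [0,1)$. I would compute $\Lambda''(\theta) = (1-\theta)^{-2} > 0$, so that $\Lambda$ is strictly convex, and solve $\Lambda'(\theta) = -t-1+(1-\theta)^{-1} = 0$ to obtain the unique minimizer $\theta^\star = t/(1+t) = f_n(\bm{x})/(f_n(\bm{x})+g_n(\bm{x}))$. Substituting $\theta^\star$ gives $\Lambda(\theta^\star) = \ln(1+t)$, whence $\ln(1+t) = \min_\theta \Lambda(\theta) \leq \Lambda(\theta_n)$ for every $\theta_n \in [0,1)$, which is precisely the first inequality together with its stated equality condition.

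For the second inequality, I would apply Lemma~\ref{lemma:quad_trans_lb} after swapping $f_n \leftrightarrow g_n$, which yields the lower bound $g_n(\bm{x})/f_n(\bm{x}) \geq 2\rho_n\sqrt{g_n(\bm{x})} - \rho_n^2 f_n(\bm{x})$, with equality at $\rho_n = \sqrt{g_n(\bm{x})}/f_n(\bm{x})$. Because the right-hand side is assumed nonnegative, both sides are strictly positive and I may invert the inequality to obtain $f_n(\bm{x})/g_n(\bm{x}) \leq (2\rho_n\sqrt{g_n(\bm{x})} - \rho_n^2 f_n(\bm{x}))^{-1}$. Multiplying by $1-\theta_n \geq 0$ (which holds since $\theta_n \leq 1$) and subtracting $h(\theta_n)$ from both sides preserves the inequality and reproduces the claimed bound, with the equality condition on $\rho_n$ inherited directly from Lemma~\ref{lemma:quad_trans_lb}.

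The delicate points are bookkeeping rather than depth: I must use $\theta_n \leq 1$ to justify that multiplying through by $1-\theta_n$ does not reverse the inequality, and I must use the hypothesis $2\rho_n\sqrt{g_n(\bm{x})} - \rho_n^2 f_n(\bm{x}) \geq 0$ (sharpened to strict positivity) to legitimize taking reciprocals; the boundary cases $\theta_n = 1$ or a vanishing denominator are either trivial or excluded. The main obstacle I anticipate is simply keeping the direction of each bound straight, since the first transform produces an upper bound on $\ln(1+f_n/g_n)$ via minimization over $\theta_n$, whereas the second produces a lower bound on the reciprocal ratio $g_n/f_n$ that must then be inverted, so I would state explicitly at each step whether the ratio is being bounded from above or below.
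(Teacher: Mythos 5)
Your proof is correct and follows essentially the same route as the paper, which simply states that the argument is ``similar as that of Lemma~\ref{lemma:quad_trans_lb}'': you minimize the convex function $\Lambda(\theta)$ via its stationary point $\theta^\star = f_n(\bm{x})/(f_n(\bm{x})+g_n(\bm{x}))$ for the first inequality (the Lagrangian dual transform), and reuse Lemma~\ref{lemma:quad_trans_lb} with $f_n$ and $g_n$ swapped, then invert, for the second (the inverse quadratic transform). Your explicit handling of the sign of $1-\theta_n$ and the strict positivity of the denominator $2\rho_n\sqrt{g_n(\bm{x})}-\rho_n^2 f_n(\bm{x})$ supplies exactly the bookkeeping the paper omits for brevity.
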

\begin{proof}
	The proof is similar as that of Lemma \ref{lemma:quad_trans_lb}, which is omitted for brevity.
\end{proof}
\begin{figure}[t!]
	\centering
	\includegraphics[width=0.95\linewidth]{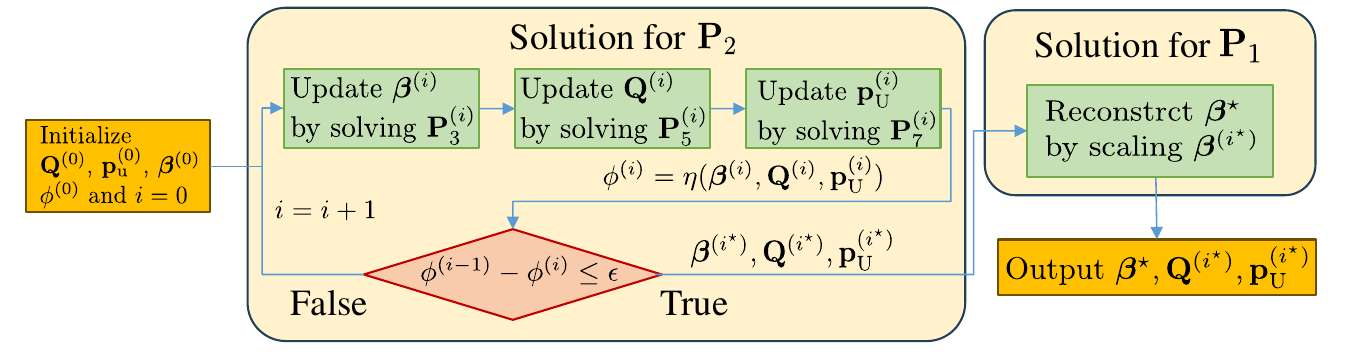}
	\caption{.~~Block diagram of the proposed algorithm for $\textbf{P}_1$.}
	\label{fig:block_diagram}
	\vspace{-1.5em}
\end{figure}

By applying Lemma \ref{lemma:quad_trans_ub}, $R_k(\mathbf{q}_n, p^{(i-1)}_{\text{U}, n})$ is upper bounded by ${R}^{(i)}_{c, k}(\mathbf{q}_n) \! \triangleq \! {B}{\log_2 e} \! \cdot \! \left(\frac{1 - \mu^{(i)}_{k, n}}{2 \nu^{(i)}_{k, n} \sqrt{\psi^{(i)}_{n}(\mathbf{q}_n)} - {\nu^{(i)^2}_{k, n}} r_{k}(\mathbf{q}_n)}\! - \!  h(\mu^{(i)}_{k, n})\right)$, where $\mu^{(i)}_{k, n} \triangleq \frac{r_{k}(\mathbf{q}^{(i-1)}_n)} {r_{k}(\mathbf{q}^{(i-1)}_n) + \psi^{(i)}_{n}(\mathbf{q}^{(i-1)}_n)}$ and $\nu^{(i)}_{k, n} \triangleq \frac{\sqrt{{\psi^{(i)}_{n}(\mathbf{q}^{(i-1)}_n)}}}{r_{k}(\mathbf{q}^{(i-1)}_n)}$. 
Then, $\omega^{(i)}_{k}(\mathbf{q}_n) \triangleq {2 \nu^{(i)}_{k, n} \sqrt{\underline{\psi}^{(i)}_{n}(\mathbf{q}_n)} - {\nu^{(i)^2}_{k, n}} \overline{r}_{k}(\mathbf{q}_n)}$ is derived with $\overline{r}_{k}(\mathbf{q}_n) \triangleq \lambda_k e^{-1}_k[n]$ and $\underline{\psi}^{{(i)}}_{n}(\mathbf{q}_n) \triangleq (\lambda_t \underline{d}_{t}^{(i)}(\mathbf{q}_n) + 2\sqrt{\lambda_t \lambda_{\text{SI}}} \underline{d}_{b}^{(i)}(\mathbf{q}_n, \mathbf{t}) + \lambda_{\text{SI}}) p^{(i-1)}_{\text{U}, n} + \sigma^2$, 
where $\underline{d}_{t}^{(i)}(\mathbf{q}_n) \triangleq 3d^{-4}(\mathbf{q}^{(i-1)}_n, \mathbf{t}) - 2d^2(\mathbf{q}_n, \mathbf{t}) d^{-6}(\mathbf{q}^{(i-1)}_n, \mathbf{t}) $ is a tight concave lower bound of $d^{-4}(\mathbf{q}_n, \mathbf{t})$ obtained via first-order Taylor approximation at $d^{2} \left(\mathbf{q}_{n},\mathbf{t}\right)=d^2(\mathbf{q}^{(i-1)}_n, \mathbf{t})$. Thus, ${R}^{(i)}_{c, k}(\mathbf{q}_n) \leq \overline{R}^{(i)}_{k}(\mathbf{q}_n) \! \triangleq \!  {B}{\log_2 e} \cdot \left(\frac{1 - \mu^{(i)}_{k, n}}{\omega^{(i)}_{k}(\mathbf{q}_n)} -  h(\mu^{(i)}_{k, n})\right)$ and \eqref{eq:data_availability2} is tightened as
\begin{align}
	\omega_{k}^{(i)}(\mathbf{q}_n) &\geq 0, n\in \mathcal{N}, k \in \mathcal{K}, \label{eq:cvx_da_0} \\
	\overline{A}_k(\mathbf{Q}, \bm{\beta}^{(i)}, \mathbf{p}_{\text{U}}^{(i-1)}) &\leq \mathcal{I}_k D_m, m\in\mathcal{M}, k\in \Omega_m, \label{eq:cvx_da_1}
\end{align}
where $\overline{A}_k(\mathbf{Q}, \bm{\beta}^{(i)}, \mathbf{p}_{\text{U}}^{(i-1)}) \triangleq \delta \sum_{n\in\mathcal{N}} \beta_{k,n} \overline{R}^{(i)}_k(\mathbf{q}_n)$. Therefore, $\textbf{P}_3^{(i)}$ is tightened as the following convex problem:
\begin{subequations}
	\begin{align}
		&\textbf{P}_{\text{5}}^{(i)}:\min_{\mathbf{Q}^{(i)}, \phi, \mathbf{E}, \mathbf{U}} \phi \ \ \ \text{s.t. }\eqref{eq:traj_end}, \eqref{eq:mobility}, \eqref{eq:cvx_rad0}\mbox{-}\eqref{eq:cvx_phi_traj}, \eqref{eq:cvx_da_0}, \eqref{eq:cvx_da_1}, \nonumber
	\end{align}
\end{subequations}
which can be optimally solved via IPM. 

\subsection{UAV Transmit Power Optimization}
Given $\bm{\beta}^{(i)}$, $\mathbf{Q}^{(i)}$, the UAV transmit power is optimized by solving the following non-convex problem:
\begin{subequations}
	\begin{align}
		&\textbf{P}_{\text{6}}^{(i)}:  \min_{\mathbf{p}_{\text{U}}^{(i)}, \phi} \phi \ \ \ \ \ \text{s.t. } \eqref{eq:power}, \eqref{eq:phi2}, \eqref{eq:radar_th2}, \eqref{eq:data_availability2}. \nonumber
	\end{align}
\end{subequations}
Define $\zeta^{(i)}_{n} \triangleq(\sqrt{\lambda_t}d^{-2}(\mathbf{q}^{(i)}_n, \mathbf{t}) + \sqrt{\lambda_{\text{SI}}})^2$ and $\kappa^{(i)}_{k, n} \triangleq \lambda_k d^{-2}(\mathbf{q}_{n}^{(i)}, \mathbf{l}_k )$. Since $R_k(\mathbf{q}^{(i)}_n, p_{\text{U}, n})$ is convex in $p_{\text{U}, n}$, we derive its lower bound via first-order Taylor approximation as
\begin{align}
	&{R}_{\mathbf{p}, k}^{(i)}(p_{\text{U}, n}) \triangleq R_k(\mathbf{q}^{(i)}_n, p^{(i-1)}_{\text{U}, n}) \! + \! \left[ (\kappa_{k,n}^{(i)} + \zeta^{(i)}_{n} p_{\text{U}, n}^{(i-1)} + \sigma^{2})^{-1} \right. \nonumber \\
	&\left. - (\kappa_{k, n}^{(i)} p_{\text{U}, n}^{(i-1)} \! + \! \sigma^{2})^{-1} \right]{\zeta_{n}^{(i)}} (p_{\text{U}, n} - p^{(i-1)}_{\text{U}, n}) B \log_2 e. \label{eq:Rk_lb_pwr}	
\end{align}
By substituting ${R}_k(\mathbf{q}_n, p_{\text{U}, n})$ with ${R}_{\mathbf{p}, k}^{(i)}(p_{\text{U}, n})$, ${\Psi}^{(i)}_{\mathbf{p}, m}(\mathbf{p}_{\text{U}})$ is obtained as a tight convex upper bound of $\Psi(\mathbf{Q}^{(i)}, \bm{\beta}^{(i)}, \mathbf{p}_{\text{U}})$ and $\textbf{P}_{\text{6}}^{(i)}$ is tightened as
\begin{subequations}
	\begin{align}
		&\textbf{P}_{\text{7}}^{(i)}:  \min_{\mathbf{p}_{\text{U}}, \phi} \phi \ \ \ \ \text{s.t. } \eqref{eq:power}, \eqref{eq:radar_th2},\eqref{eq:data_availability2}, {\Psi}^{(i)}_{\mathbf{p},m}(\mathbf{p}_{\text{U}}) \leq \phi , m \in \mathcal{M}, \nonumber
	\end{align}
\end{subequations}
which can be optimally solved via IPM.

\subsection{Convergence and Complexity Analysis}
\label{sec:convergence}
Let $\eta\left(\bm{\beta}, \mathbf{Q}, \mathbf{p}_{\text{U}}\right) \! \triangleq \! \max_{\forall m} \Psi_{m}\left(\mathbf{Q}, \bm{\beta}, \mathbf{p}_{\text{U}}\right)$. By contradiction, the optimal $\phi = \eta \left(\bm{\beta},\mathbf{Q},\mathbf{p}_{\text{U}}\right)$ for $\textbf{P}_{2}$ given any feasible $\bm{\beta}, \mathbf{Q}$, and $\mathbf{p}_{\text{U}}$. Since \eqref{eq:phi2}-\eqref{eq:data_availability2} are tightened in $\textbf{P}_5^{(i)}$ and $\textbf{P}_7^{(i)}$, each update on $\mathbf{Q}$ or $\mathbf{p}_{\text{U}}$ produces a feasible solution for $\textbf{P}_2$.
Besides, as $\bm{\beta}^{(i)}$ and $\mathbf{Q}^{(i)}$ are optimal for $\textbf{P}^{(i)}_3$ and $\textbf{P}^{(i)}_5$, respectively, $\eta(\bm{\beta}^{(i-1)}, \mathbf{Q}^{(i-1)}, \mathbf{p}_{\text{U}}^{(i-1)}) \geq \eta(\bm{\beta}^{(i)}, \mathbf{Q}^{(i-1)}, \mathbf{p}_{\text{U}}^{(i-1)}) = \max_{\forall m} {\Psi}^{(i)}_{\mathbf{Q}, m}(\mathbf{Q}^{(i-1)}) \geq  \max_{\forall m} {\Psi}^{(i)}_{\mathbf{Q}, m}(\mathbf{Q}^{(i)})$. Also,
\vspace{-0.5em}
\begin{align}
	\max_{\forall m} {\Psi}^{(i)}_{\mathbf{Q}, m}(\mathbf{Q}^{(i)}) & \geq  \eta(\bm{\beta}^{(i)}, \mathbf{Q}^{(i)}, \mathbf{p}_{\text{U}}^{(i-1)}) \overset{{(a)}}{\geq} \max_{\forall m} {\Psi}^{(i)}_{\mathbf{p},m}(\mathbf{p}^{(i)}_{\text{U}}) \nonumber \\
	& \geq  \eta(\bm{\beta}^{(i)}, \mathbf{Q}^{(i)}, \mathbf{p}_{\text{U}}^{(i)}), \label{eq:eta}
\end{align}
where $(a)$ holds as $\mathbf{p}^{(i)}_{\text{U}}$ is optimal for $\textbf{P}_{\text{7}}^{(i)}$ given $\bm{\beta}^{(i)}$ and $\mathbf{Q}^{(i)}$. Thus, $\eta(\bm{\beta}^{(i-1)}, \mathbf{Q}^{(i-1)}, \mathbf{p}_{\text{U}}^{(i-1)}) \geq \eta(\bm{\beta}^{(i)}, \mathbf{Q}^{(i)}, \mathbf{p}_{\text{U}}^{(i)})$, i.e., the objective value of $\textbf{P}_{\text{2}}$ is non-increasing over iterations and convergence is thus shown. 
Upon convergence at the $i^{\star}$-th BCD iteration, a feasible solution for $\textbf{P}_1$ is constructed by setting $\beta^{\star}_{k^{\prime}, n} \triangleq \frac{\mathcal{I}_{k^{\prime}} D_{m}}{\tilde{A}_{k^{\prime}}(\mathbf{Q}^{(i^{\star})}, \bm{\beta}^{(i^{\star})}, \mathbf{p}^{(i^{\star})}_{\text{U}})} \beta^{(i^{\star})}_{k^{\prime}, n}, n\in \mathcal{N}, k^{\prime} \in \{k | \tilde{A}_{k}(\mathbf{Q}^{(i^{\star})}, \bm{\beta}^{(i^{\star})}, \mathbf{p}^{(i^{\star})}_{\text{U}}) > \mathcal{I}_k D_m, k\in \Omega_m, m\in \mathcal{M} \}$.

The block diagram of the proposed algorithm is depicted in Fig. \ref{fig:block_diagram}. 
Since the worst-case complexity of IPM is $\mathcal{O}(r^{3.5})$, where $r$ denotes the number of variables, the complexity of the proposed algorithm in each BCD iteration is $\mathcal{O}(r_a^{3.5} + r_b^{3.5} + r_c^{3.5})$ with $r_a \triangleq 3KN+M+N+K+1, r_b \triangleq 3KN+M+3N+K+1, r_c \triangleq KN+M+2N+K+1$.

\vspace{-0.5em}
\section{Simulation Results}
We simulate a UAV-assisted wireless network, where $M=2$ CMs are trained by data samples from $K=5$ IoT devices. Specifically, we train a ResNet-56 model for the CIFAR-10 dataset and a five-layer convolutional neural network for the Fashion-MNIST dataset. The two datasets are available at devices in $\Omega_1 = \left\{1, 2\right\}$ and $\Omega_2 = \left\{3, 4, 5\right\}$, respectively ($D_{1} = 24584$ bits and $D_{2} = 6276$ bits). 
MEC server and the sensing target are respectively located at $[1.7\ \text{km}, 2.9\ \text{km}]^{\mathrm{T}}$ and $[1.9\ \text{km}, 2.8\ \text{km}]^{\mathrm{T}}$, and the horizontal coordinates of the IoT devices are given by $[2.2\ \text{km}, 3.1\ \text{km}]^{\mathrm{T}}$,$[2.0\ \text{km}, 2.9\ \text{km}]^{\mathrm{T}}$, $[2.2\ \text{km}, 2.65\ \text{km}]^{\mathrm{T}}$, $[1.8\ \text{km},$ $3.1\ \text{km}]^{\mathrm{T}}$, $[1.7\ \text{km}, 2.6\ \text{km}]^{\mathrm{T}}$. 
We set $H=40~$m, $v_{\text{max}}=60~$m\slash s, $\alpha_{\text{SI}}=-110~$dB, $B=0.2~$MHz, $\lambda_0 = -50~$dB,
$N_0 = -79~$dBm, $p_{\text{UAV}}=0.04~$W, $p_k=0.01~$W, $\forall k$, $N_a = 8$, $\delta = 1~$s, $\lambda=90~$mm, $d_{p, q}=\frac{\lambda}{2}(N_a+q-p)$, $\xi \!=\! 20~\text{m}^2$, and the algorithm terminates when the difference of objective values between BCD iterations falls below $\epsilon=10^{-3}$. 
Parameters $a_1 = 25.03$, $b_1 = 0.55$, $a_2 =0.82$, $b_2 =0.22$ are acquired with $A_1=5120$ and $A_2=800$ historical data samples, respectively. The numbers of data samples available at devices are set as 1500, 2800, 800, 800, 800, respectively. We consider the following baselines for comparisons:
\begin{itemize}
	\item \textbf{Throughput Maximization (TMax)~\cite{kaitao_isac_2023}:} The uplink time allocation, UAV trajectory, and UAV transmit power are optimized with the objective of minimum throughput among devices subject to the sensing requirement.
	\item \textbf{Constant UAV Transmit Power (ConstP):} The UAV transmit power is fixed as $p_{\text{UAV}}$, while the UAV trajectory and uplink time allocation are optimized with the proposed algorithm subject to the sensing requirement.
\end{itemize}

The learning performance, including the objective value of $\textbf{P}_{\text{1}}$ and the maximum error of two CMs, is shown in Fig. \ref{fig:obj}. On one hand, \textbf{TMax} exhibits the worst classification accuracy, as it is agnostic to the different data utilities for training CMs. On the other hand, since the UAV transmit power is fixed in \textbf{ConstP}, interference arising from the sensing echo to the uplink data signal reduces the achievable rates, resulting in fewer data samples collected during the UAV's flight and thus inferior model accuracy compared with the proposed algorithm. From Fig. \ref{fig:interplay}, the tradeoff between the classification accuracy and sensing performance can be observed, both of which can be improved with a longer flight period of the UAV.

The UAV trajectory and proportion of data samples at IoT devices being collected under different schemes are shown in Fig. \ref{fig:traj40} and Fig. \ref{fig:traj100}. When $T=40~$s, the UAV cruises around to collect data samples from IoT devices with \textbf{TMax}; while under \textbf{ConstP} and the proposed algorithm, the UAV mainly collects data samples from devices in $\Omega_1$. This is because \textbf{TMax} are throughput-oriented and the other two schemes are learning-oriented. 
When $T=100~$s, under the proposed algorithm, the UAV flies closer to IoT device $1$ for improving the accuracy of CM $1$. This is because all data samples stored at IoT device $2$ can be collected with a longer flight period. Since the co-channel  interference between downlink sensing and uplink data signal may result in severe performance degradation, joint optimization for uplink time allocation, UAV trajectory, and downlink transmit power with learning-awareness is vital to dual-functional wireless networks.
\begin{figure}[t!]
	\centering
	\hspace{0.3em}
	\begin{subfigure}[b]{0.42\linewidth}
		\centering
		\includegraphics[width=\linewidth]{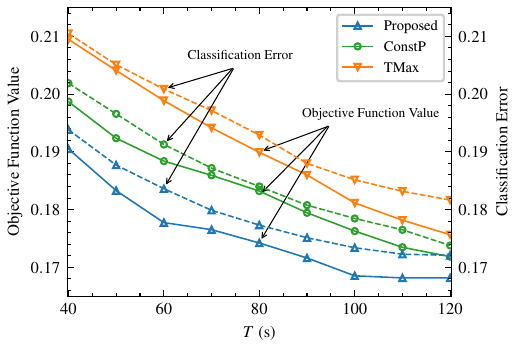}
		\subcaption{Leanring Performance}
		\label{fig:obj}
	\end{subfigure}
	\begin{subfigure}[b]{0.42\linewidth}
		\centering
		\includegraphics[width=0.9\linewidth]{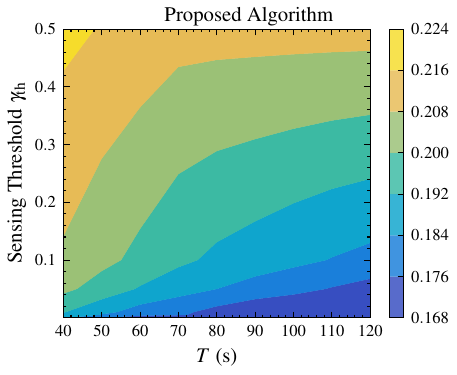}
		\subcaption{Model Error vs. $T$ and $\gamma_{\text{th}}$}
		\label{fig:interplay}
	\end{subfigure}
	\begin{subfigure}[b]{0.42\linewidth}
		\centering
		\includegraphics[width=0.85\linewidth]{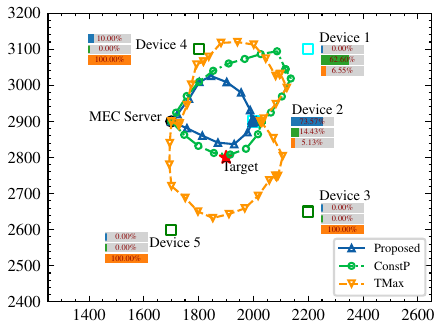}
		\subcaption{Trajectory ($T=40~$s)}
		\label{fig:traj40}
	\end{subfigure}
	\begin{subfigure}[b]{0.42\linewidth}
		\centering
		\includegraphics[width=0.85\linewidth]{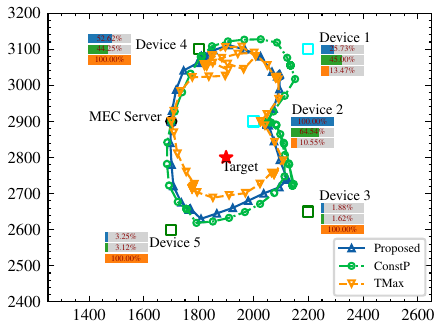}
		\subcaption{Trajectory ($T=100~$s)}
		\label{fig:traj100}
	\end{subfigure}
	\caption{.~~Learning performance and UAV trajectory of different algorithms. The progress bars in (c) and (d) represent the proportions of data samples being collected at each device.}
	\label{fig:error_compare}
	\vspace{-1.5em}
\end{figure}

\section{Conclusion}
A full-duplex UAV-enabled wireless network with sensing and edge learning functions was proposed in this letter. To maximize the accuracy of trained classification models, we developed a learning-aware optimization algorithm for uplink time allocation, UAV trajectory, and UAV transmit power. Simulation results show that the proposed algorithm outperforms the conventional data-oriented design and demonstrate the importance of UAV transmit power optimization. Extending the proposed algorithm to further enable downlink communication would be very interesting.

\bibliographystyle{IEEEtran}
\bibliography{ref}

\begin{thebibliography}{10}
\providecommand{\url}[1]{#1}
\csname url@samestyle\endcsname
\providecommand{\newblock}{\relax}
\providecommand{\bibinfo}[2]{#2}
\providecommand{\BIBentrySTDinterwordspacing}{\spaceskip=0pt\relax}
\providecommand{\BIBentryALTinterwordstretchfactor}{4}
\providecommand{\BIBentryALTinterwordspacing}{\spaceskip=\fontdimen2\font plus
\BIBentryALTinterwordstretchfactor\fontdimen3\font minus \fontdimen4\font\relax}
\providecommand{\BIBforeignlanguage}[2]{{%
\expandafter\ifx\csname l@#1\endcsname\relax
\typeout{** WARNING: IEEEtran.bst: No hyphenation pattern has been}%
\typeout{** loaded for the language `#1'. Using the pattern for}%
\typeout{** the default language instead.}%
\else
\language=\csname l@#1\endcsname
\fi
#2}}
\providecommand{\BIBdecl}{\relax}
\BIBdecl

\bibitem{fan_isac_2022}
F.~Liu \emph{et~al.}, ``Integrated sensing and communications: Toward dual-functional wireless networks for {6G} and beyond,'' \emph{IEEE J. Sel. Areas Commun.}, vol.~40, no.~6, pp. 1728--1767, Jun. 2022.

\bibitem{yuyi_greenai_2024}
Y.~Mao, X.~Yu, K.~Huang, Y.-J.~A. Zhang, and J.~Zhang, ``Green edge {AI}: A contemporary survey,'' \emph{Proc. IEEE}, to appear.

\bibitem{haocheng_isac_2023}
H.~Hua, J.~Xu, and T.~X. Han, ``Optimal transmit beamforming for integrated sensing and communication,'' \emph{IEEE Trans. Veh. Technol.}, vol.~72, no.~8, pp. 10\,588--10\,603, Aug. 2023.

\bibitem{zhenyao_fd_isac_2023}
Z.~He, W.~Xu, H.~Shen, D.~Ng, Y.~C. Eldar, and X.~You, ``Full-duplex communication for {ISAC}: Joint beamforming and power optimization,'' \emph{IEEE J. Sel. Areas Commun.}, vol.~41, no.~9, pp. 2920--2936, Sep. 2023.

\bibitem{dongzhu_data_2019}
D.~Liu, G.~Zhu, Q.~Zeng, J.~Zhang, and K.~Huang, ``Wireless data acquisition for edge learning: Data-importance aware retransmission,'' \emph{IEEE Trans. Wireless Commun.}, vol.~20, no.~1, pp. 406--420, Jan. 2021.

\bibitem{shuai_ml_edge_2020}
S.~Wang, Y.-C. Wu, M.~Xia, R.~Wang, and H.~V. Poor, ``Machine intelligence at the edge with learning centric power allocation,'' \emph{IEEE Trans. Wireless Commun.}, vol.~19, no.~11, pp. 7293--7308, Nov. 2020.

\bibitem{laingkai_edge_2021}
L.~Zhou \emph{et~al.}, ``Learning centric wireless resource allocation for edge computing: Algorithm and experiment,'' \emph{IEEE Trans. Veh. Technol.}, vol.~70, no.~1, pp. 1035--1040, Jan. 2021.

\bibitem{mengxuan_fedisac_2024}
M.~Du, H.~Zheng, M.~Gao, X.~Feng, J.~Hu, and Y.~Chen, ``Integrated sensing, communication and computation for over-the-air federated learning in {6G} wireless networks,'' \emph{IEEE Internet Things J.}, to appear.

\bibitem{peixi_fedisac_2022}
P.~Liu, G.~Zhu, W.~Jiang, W.~Luo, J.~Xu, and S.~Cui, ``Vertical federated edge learning with distributed integrated sensing and communication,'' \emph{IEEE Commun. Lett.}, vol.~26, no.~9, pp. 2091--2095, Sep. 2022.

\bibitem{qingqing_uav_2018}
Q.~Wu, Y.~Zeng, and R.~Zhang, ``Joint trajectory and communication design for multi-{UAV} enabled wireless networks,'' \emph{IEEE Trans. Wireless Commun.}, vol.~17, no.~3, pp. 2109--2121, Mar. 2018.

\bibitem{kaitao_isac_2023}
K.~Meng, Q.~Wu, S.~Ma, W.~Chen, K.~Wang, and J.~Li, ``Throughput maximization for {UAV}-enabled integrated periodic sensing and communication,'' \emph{IEEE Trans. Wireless Commun.}, vol.~22, no.~1, pp. 671--687, Jan. 2023.

\bibitem{jianxin_uav_learn_2023}
J.~Liu, Z.~Xu, and Z.~Wen, ``Joint data transmission and trajectory optimization in {UAV}-enabled wireless powered mobile edge learning systems,'' \emph{IEEE Trans. Veh. Technol.}, vol.~72, no.~9, pp. 11\,617--11\,630, Sep. 2023.

\bibitem{yannan_fp_2024}
Y.~Chen, L.~Zhao, and K.~Shen, ``Mixed max-and-min fractional programming for wireless networks,'' \emph{IEEE Trans. Signal Process.}, vol.~72, pp. 337--351, 2024.

\bibitem{mimo_radar_2007}
P.~Stoica, J.~Li, and Y.~Xie, ``On probing signal design for {MIMO} radar,'' \emph{IEEE Trans. Signal Process.}, vol.~55, no.~8, pp. 4151--4161, Aug. 2007.

\bibitem{xiaoming_si_2018}
X.~Chen, S.~Zhang, and Q.~Li, ``A review of mutual coupling in {MIMO} systems,'' \emph{IEEE Access}, vol.~6, pp. 24\,706--24\,719, 2018.

\end{thebibliography}
\end{document}